\begin{document}
\title{Maximising the total weight of on-time jobs on parallel machines subject to a conflict graph
}
\titlerunning{Maximising the total weight of on-time jobs subject to a conflict graph}
%
\author{Yakov Zinder\inst{1}\orcidID{0000-0003-2024-8129} \and
Joanna Berli\'{n}ska\inst{2}\orcidID{0000-0003-2120-2595} \and
Charlie Peter\inst{1}}
\authorrunning{Y. Zinder et al.}
%
\institute{School of Mathematical and Physical Sciences, University of Technology Sydney, Australia \\
\email{Yakov.Zinder@uts.edu.au, 13240039@student.uts.edu.au}
\and
Faculty of Mathematics and Computer Science, Adam Mickiewicz University, Pozna\'n, Poland\\
\email{Joanna.Berlinska@amu.edu.pl} }
\maketitle              
\begin{abstract}
The paper considers scheduling on parallel machines under the constraint that some pairs of jobs cannot be processed concurrently.
Each job has an associated weight, and all jobs have the same deadline.
The objective is to maximise the total weight of on-time jobs.
The problem is known to be strongly NP-hard in general.
A polynomial-time algorithm for scheduling unit execution time jobs on two machines is proposed.
The performance of a broad family of approximation algorithms for scheduling unit execution time jobs on more than two machines is analysed.
For the case of arbitrary job processing times, two integer linear programming formulations are proposed and compared with two formulations known from the earlier literature.
An iterated variable neighborhood search algorithm is also proposed and evaluated by means of computational experiments.

\keywords{Scheduling \and Parallel machines \and Total weight of on-time jobs \and Conflict graph}
\end{abstract}
\section{Introduction}

The paper is concerned with scheduling a set $N$ of $n$ jobs on $m > 1$ identical parallel machines under the restriction that some pairs of jobs cannot be processed concurrently. The jobs are numbered from 1 to $n$ and are referred to by these numbers, i.e. $N = \{1, ..., n\}$. In order to be completed, a job $j$ should be processed during $p_j$ units of time, where the processing time $p_j$ is integer. Each job can be processed by only one of the machines, and if a machine starts processing a job $j$, then it should process it without interruptions (without preemptions) for the entire job's processing time $p_j$. At any point in time a machine can process at most one job. The only exceptions are the points in time when one job finishes processing and another commences its processing. 

The undirected graph $G(N,E)$, where the set of jobs $N$ is the set of nodes and the set of edges $E$ is the set of all pairs of jobs that cannot be processed concurrently, usually is referred to as the conflict graph. The complement of the conflict graph $G(N, E)$, that is the graph with the same set of nodes $N$ and with the set of edges that is comprised of all edges that are not in $E$, will be referred to as the agreement graph. So, jobs can be processed concurrently only if they induce a complete subgraph (a clique) of the agreement graph.

The processing of jobs commences at time $t = 0$. A schedule $\sigma$ specifies for each $j \in N$ its starting time $S_j(\sigma)$. Since preemptions are not allowed, the completion time of job $j$ in this schedule is 
\[
 C_j(\sigma) = S_j(\sigma)+p_j.
\]
Each job $j$ has an associated positive weight $w_j$ and all jobs have the same deadline $D$. The objective is to find a schedule with the largest total weight of on-time jobs (also referred to as the weighted number of on-time jobs) 
\begin{equation}\label{objF}
 F(\sigma) = \sum_{j \in J} w_j(1-U(C_j(\sigma))),
\end{equation}
where 
\[
 U(t) = \left\{ \begin{array}{ll}
 			1 & \mbox{ if } t > D \\
			0 & \mbox{otherwise}
 		   \end{array}
	  \right.. 
\]

The situation when some jobs cannot be processed concurrently due to technological restrictions and when it may not be possible to complete all jobs during the given time period, arises in planning of maintenance. Thus, an operator of a communication network faces such a situation when it is needed to execute on the parallel computers the so called change requests that modify some parts of this network \cite{ha2019capacitated}. Each computer can execute only one program (change request) at a time, and each change request can be assigned to only one computer. Any two  change requests which affect overlapping parts of the network, cannot be executed concurrently. Every change request is initiated by a technician who remains involved during the entire period of the request's execution, and it may not be possible to execute all change requests during one shift. The change requests have different importance, which is modelled by associating with each change request a certain positive number (weight). The goal is to maximise the total weight of the change requests that are executed during the current shift.

The considered scheduling problem also arises in various  make-to-order systems \cite{arkin1987scheduling}, where $D$ producers are to be assigned to jobs, each of which is a certain production process during the time interval specified by this job.  Each job can be allocated to at most one producer, and each producer can be assigned to any job subject to the following two restrictions: jobs cannot be allocated to the same producer if their time intervals overlap and each producer cannot be assigned to more than $m > 1$ jobs, where $m$ is an integer. Each job has a weight, for example, the associated profit. The goal is to maximise the total weight of all allocated jobs. 
The problem of assigning the producers to jobs in such a make-to-order system is equivalent to the problem of scheduling jobs on $m$ parallel identical machines under the restrictions imposed by a conflict graph. Indeed, associate with each job in the make-to-order system a job which can be processed on any of these $m$ parallel machines and which requires one unit of processing time. In the literature on scheduling, jobs with unit processing (execution) time are referred to as UET jobs. Let $N$ be the set of all UET jobs and let $G(N,E)$ be the conflict graph where the set of edges is the set of all pairs of UET jobs for which the corresponding jobs in the make-to-order system overlap. Then, the two problems are equivalent if all UET jobs have the same deadline $D$ and each UET job has the same weight as the corresponding job in the make-to-order system. Observe that in the make-to-order systems the conflict graphs have a special structure. Such graphs are called interval graphs \cite{balakrishnan2012textbook}.  

The scheduling problems with parallel machines and the restrictions imposed by a conflict graph, are an area of intensive research. The main focus in this research was on  the minimisation of the makespan  
\begin{equation}\label{makespan}
 C_{max}(\sigma) = \max_{j \in N} C_j(\sigma),
\end{equation}
and despite of various applications and the challenging mathematical nature, the maximisation of (\ref{objF}) has attracted much less attention than it deserves. The paper addresses this gap in the knowledge by establishing the existence of a polynomial-time algorithm, analysing the performance of approximation algorithms, and by presenting integer linear programming formulations and heuristics together with their comparison by means of computational experiments. 

\section{Related work}

\label{motivation}

As has been mentioned above, the majority of publications on scheduling on parallel identical machines under the restrictions imposed by a conflict graph, pertain to the makespan minimisation \cite{baker1996mutual,jansen2003mutual,gardi2008mutual,gardi2009mutual,demange2007time,even2009scheduling,bendraouche2012scheduling,bampis2014bounded,bendraouche2015scheduling,bendraouche2016scheduling,kowalczyk2017exact,mohabeddine2019new}. 
The publications on the scheduling problems with UET jobs constitute considerable part of the generated literature, including, in particular, \cite{baker1996mutual,jansen2003mutual,gardi2008mutual,gardi2009mutual,even2009scheduling} from the list above. As has been shown in \cite{baker1996mutual}, the problem of scheduling UET jobs on parallel identical machines with a conflict graph and the objective function (\ref{makespan}) arises in balancing the load on parallel processors when partial differential equations are solved using the domain decomposition. In this application, each region of the domain is viewed as a job and each pair of regions which have common points is viewed as an edge in the conflict graph. Another application of the UET case of the makespan minimisation, mentioned in \cite{baker1996mutual}, is the exams timetabling problem, where two exams cannot be scheduled concurrently if some students must sit for both of them.

The makespan minimisation problem with parallel identical machines, UET jobs and a conflict graph closely relates to two problems that are among the central in the graph theory: the graph coloring problem \cite{diestel2012graph} and the maximum matching problem \cite{lovasz2009matching}. In the graph coloring problem, each color class is the set of jobs that are processed concurrently, and therefore, this problem has an additional restriction that the cardinality of each color class cannot exceed the limit imposed by the number of machines. Since the problem of partitioning a graph into triangles is NP-complete in the strong sense \cite{garey1979computers}, the graph coloring problem (and the equivalent makespan minimisation problem) is NP-hard in the strong sense even in the case when each color class cannot have more than three nodes. Therefore, the research on the graph coloring with a limit on the size of each color class was focused on various particular classes of graphs \cite{bodlaender1995restrictions}. 
The relevance of the maximum matching problem follows from the observation that any two jobs, which can be processed concurrently, correspond to an edge in the agreement graph and the minimisation of the makespan for $m=2$ is equivalent to finding the largest number of edges in the agreement graph that do not have common nodes  \cite{baker1996mutual}. Observe that the maximum matching problem as well as the maximum weight matching problem can be solved in polynomial time \cite{edmonds1965maximum,edmonds1965paths}.

If all jobs can be completed by the deadline $D$, then the corresponding schedule is optimal for (\ref{objF}). Therefore, NP-hardness results for the makespan imply the NP-hardness for the total weight of on-time jobs. Furthermore, the well known result of \cite{zuckerman2006linear} established that, for all $\varepsilon > 0$, it is NP-hard to find an approximation for the maximum clique problem even within $n^{1-\varepsilon}$. The latter implies the inapproximability of the maximisation of (\ref{objF}) in the case when $m = n$, $D=1$, and the jobs are UET jobs with equal weights.

\section{Scheduling UET jobs}

This section assumes that the optimal makespan is greater than $D$, because otherwise the maximisation of (\ref{objF}) is equivalent to the makespan minimisation.

\subsection{Polynomial-time algorithm for $m=2$}

As has been mentioned above, if $m=2$, a schedule that minimises the makespan can be found by constructing a maximum matching in the agreement graph. 
In contrast, it can be shown that a maximum weight matching may contain no edges which represent pairs of jobs executed concurrently in any schedule that maximises (\ref{objF}). As will be shown below, this scheduling problem can be solved in polynomial time by using the idea suggested in \cite{plesnik1999constrained} for the generalisation of the maximum weight matching called in  
\cite{plesnik1999constrained} the constrained matching.

Let $G(N,E^c)$ be the complement of the conflict graph $G(N,E)$. This agreement graph is transformed in two steps. According to the first step, each edge $e = \{j, g\}$ in $E^c$ is assigned the weight $u_e = w_j + w_g$ and, for each job $j \in N$, a new node $j'$ and the edge $e = \{j, j'\}$ with the weight $u_e = w_j$ are introduced. This doubles the number of nodes and increases the number of edges by $n$. Denote the set of new nodes by $N'$ and the set of new edges by $E'$. The second step is based on the idea in \cite{plesnik1999constrained}:  the introduction of the set $Q$ of $2(n - D)$ additional nodes together with the set $E_Q$ of $2(n - D)|N\cup N'|$ edges, all of the same weight $u = 0$, that link each node in $Q$ with every node in $N \cup N'$. 

Any optimal schedule $\sigma$ induces a perfect matching $M$ in $G(N\cup N'\cup Q, E^c\cup E'\cup E_Q)$, where each pair of jobs $j$ and $g$, processed concurrently and completed on time, induces the edge $\{j, g\}$; each job $j$, which is completed on time and is not processed concurrently with any other job, induces the edge in $E'$ that covers this job; and the remaining $2(n-D)$ edges are from $E_Q$, each covering a distinct node in $N\cup N'$, which is not covered by the induced edges from $E^c\cup E'$, with a distinct node in $Q$.

Conversely, any maximum weight perfect matching $M$ in $G(N\cup N'\cup Q, E^c\cup E'\cup E_Q)$ induces a schedule $\sigma$, where
two jobs $j$ and $g$ are processed in $\sigma$ concurrently only if $\{j, g\} \in M$; the jobs, covered by the edges in $E_Q$, are tardy (being perfect, $M$ must contain $2(n-D)$ edges from $E_Q$) and the remaining jobs, covered by the remaining $D$ edges (all these edges are from $E^c\cup E'$) are completed on time. 

In the discussion above, the weight of the matching $M$ induced by $\sigma$, is equal to $F(\sigma)$, and $F(\sigma)$ for the schedule $\sigma$ induced by the matching $M$, is equal to the weight of $M$. Therefore, any maximum weight perfect matching induces an optimal schedule. Since such a matching can be constructed in polynomial time \cite{lovasz2009matching}, an optimal schedule is also constructed in polynomial time.

\subsection{Approximation algorithms for $m > 2$}

This subsection is concerned with the performance of the algorithms that construct a schedule for arbitrary $m$, using as a subroutine  the method of constructing an optimal schedule for $m=2$. For any algorithm $A$, denote by $\sigma^A$ a schedule constructed according to this algorithm. For any schedule $\sigma$, let $J(\sigma)$ be the set of jobs that are completed on time in this schedule. Denote by $\sigma^*$ an optimal schedule, i.e. a schedule with the largest value of (\ref{objF}), and by $\sigma_2$ a schedule with the largest total weight of on-time jobs for the problem, obtained from the original problem by replacing the original number of machines $m$ by 2. Observe that $\sigma_2$ can be constructed in polynomial time, using the method described in the previous subsection. Let $\mathfrak{A}$ be the set of all algorithms $A$ such that  
$
 J(\sigma_2) \subseteq J(\sigma^A).
$
\begin{theorem}
For each $A \in \mathfrak{A}$, 
\[
 F(\sigma^*) \le \frac{m}{2}\;F(\sigma^A)  
\]
and this performance guarantee is tight.
\end{theorem}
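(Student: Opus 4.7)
My plan is to reduce the bound to a 2-machine comparison and then examine $\sigma^{*}$ slot by slot. Since $J(\sigma_{2})\subseteq J(\sigma^{A})$ and all weights $w_{j}$ are positive, summing over the larger set gives $F(\sigma^{A})\ge F(\sigma_{2})$, so it suffices to prove $F(\sigma^{*})\le \tfrac{m}{2}\,F(\sigma_{2})$.

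To compare $\sigma^{*}$ with $\sigma_{2}$, I would construct an auxiliary 2-machine schedule $\sigma'$ from $\sigma^{*}$ as follows. For each unit-time slot $t=1,\dots,D$, let $J_{t}$ be the set of jobs that $\sigma^{*}$ runs in slot $t$ and put $k_{t}=|J_{t}|\le m$. Because $\sigma^{*}$ processes these jobs concurrently, they induce a clique in the agreement graph, so any two of them can be scheduled together on two machines within slot $t$. Let $\sigma'$ be the 2-machine schedule that in every slot $t$ processes the (at most) two heaviest jobs of $J_{t}$. Since $\sigma'$ is feasible and completes all of its jobs by $D$, the optimality of $\sigma_{2}$ for the 2-machine problem yields $F(\sigma_{2})\ge F(\sigma')$.

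The heart of the argument is a slot-wise inequality. Listing the weights in $J_{t}$ in non-increasing order as $w_{t,1}\ge w_{t,2}\ge\dots\ge w_{t,k_{t}}$, I want $\sum_{i=1}^{k_{t}} w_{t,i}\le \tfrac{m}{2}(w_{t,1}+w_{t,2})$, where the sum $w_{t,1}+w_{t,2}$ is read as $w_{t,1}$ when $k_{t}=1$ and as $0$ when $k_{t}=0$; this right-hand side is exactly the contribution of slot $t$ to $F(\sigma')$. The cases $k_{t}\in\{0,1\}$ follow from $m\ge 2$; for $k_{t}\ge 2$, bounding $\sum_{i\ge 2} w_{t,i}\le (k_{t}-1)w_{t,2}$ reduces the claim to $(k_{t}-2)(w_{t,1}-w_{t,2})\ge 0$, which is immediate. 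Replacing the factor $k_{t}/2$ by the weaker $m/2$ and summing over $t$ yields $F(\sigma^{*})\le \tfrac{m}{2}F(\sigma')\le \tfrac{m}{2}F(\sigma_{2})\le \tfrac{m}{2}F(\sigma^{A})$.

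For tightness I would take $D=1$ and $n=m+1$ UET jobs: jobs $1,\dots,m$ pairwise non-conflicting with weight $1$ each, and job $m+1$ conflicting with each of them and of some weight strictly less than $2$. The optimal makespan then equals $2>D$, in line with the standing assumption of the section; $\sigma^{*}$ schedules the $m$-clique in the single available slot, giving $F(\sigma^{*})=m$; $\sigma_{2}$ attains $F(\sigma_{2})=2$; and the algorithm of $\mathfrak{A}$ that merely outputs $\sigma_{2}$ yields $F(\sigma^{A})=2$, so $F(\sigma^{*})/F(\sigma^{A})=m/2$. The only non-routine part is the slot-wise inequality, particularly getting the $k_{t}\in\{0,1\}$ edge cases to fit inside the same clean $m/2$ bound used for $k_{t}\ge 2$.
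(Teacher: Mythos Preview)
Your upper-bound argument is correct and is essentially the same slot-wise comparison the paper uses: build a feasible $2$-machine schedule $\sigma'$ from the two heaviest jobs in each slot of $\sigma^{*}$, bound each slot's weight by $\tfrac{m}{2}$ times its top-two weight, and chain $F(\sigma^{*})\le \tfrac{m}{2}F(\sigma')\le \tfrac{m}{2}F(\sigma_{2})\le \tfrac{m}{2}F(\sigma^{A})$. The only cosmetic difference is that the paper routes the algebra through the third-largest weight $w_{t,3}$ rather than through your inequality $(k_{t}-2)(w_{t,1}-w_{t,2})\ge 0$.

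Your tightness argument, however, has a genuine gap. The theorem (as the paper makes explicit) asserts that the bound is tight for \emph{every} $A\in\mathfrak{A}$, not merely for one member of the family. Your instance with $D=1$ and an $m$-clique plus one universally conflicting job only witnesses tightness for the particular algorithm that outputs $\sigma_{2}$ and stops. Any $A\in\mathfrak{A}$ that first schedules $J(\sigma_{2})=\{1,2\}$ and then greedily fills the remaining $m-2$ machines with jobs $3,\dots,m$ (which do not conflict with $1$ or $2$) satisfies $J(\sigma_{2})\subseteq J(\sigma^{A})$ yet achieves $F(\sigma^{A})=m=F(\sigma^{*})$, so the ratio is $1$ rather than $m/2$.

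The paper avoids this by engineering the instance so that \emph{every} $A\in\mathfrak{A}$ is forced into $F(\sigma^{A})=F(\sigma_{2})$. Its agreement graph has two components: a clique $K_{mD}$ of weight-$w$ jobs and $D$ disjoint edges whose $2D$ endpoints have weight $w+\delta$. The slightly larger weight $w+\delta$ makes $J(\sigma_{2})$ consist exactly of those $2D$ endpoints; since each such job conflicts with every clique job and with every other matching job except its partner, scheduling all of $J(\sigma_{2})$ consumes all $D$ slots and blocks every clique job. Hence $J(\sigma^{A})=J(\sigma_{2})$ for any $A$ with $J(\sigma_{2})\subseteq J(\sigma^{A})$, and the ratio $mwD/(2D(w+\delta))$ can be pushed arbitrarily close to $m/2$. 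To repair your argument you need a construction with this ``locking'' feature, which your single-slot example lacks.
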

\begin{proof}
For each integer $1 \le t \le D$, denote by $k(t)$ the number of jobs processed in schedule $\sigma^*$  in the time slot $[t-1, t]$, i.e. the number of jobs $j$ such that $C_j(\sigma^*)=t$. By virtue of the assumption that the makespan is greater than $D$, for all considered integers, $k(t) \ge 1$. Let $j_{t,1}$, ..., $j_{t,k(t)}$ be the jobs, processed in the time slot $[t-1, t]$ and numbered in a nonincreasing order of their weights, i.e.   
\begin{equation}\label{theorem:w>w}
 w_{t, 1} \ge ... \ge w_{t,k(t)},
\end{equation}
and denote 
\[
 u(t) =    \left\{
            \begin{array}{ll}
             0 & \mbox{ if } k(t) \le 2\\
             w_{t,3} & \mbox{otherwise}
            \end{array}
        \right..
\]

Let $\sigma^*_2$ be any schedule for the agreement graph $G(N,E^c)$ and two machines such that, for any integer $1 \le t \le D$ and each integer $1 \le i \le \min[2, k(t)]$, $C_{j_{t, i}}(\sigma^*_2) = t$. Then, taking into account (\ref{theorem:w>w}), for each integer $1 \le t \le D$,
\[
 u(t) \le \frac{1}{2}\sum_{\{j:\; C_j(\sigma^*_2)=t\}}w_j
\]
and consequently, using $J(\sigma_2) \subseteq J(\sigma^A)$,
\[
 F(\sigma^*) =
 \sum_{1 \le t \le D}\sum_{\{j:\; C_j(\sigma^*)=t\}}w_j \le \sum_{1 \le t \le D}\left [\sum_{\{j:\; C_j(\sigma^*_2)=t\}}w_j + (m-2)u(t) \right ]
\]
\[
 \le \frac{m}{2}\sum_{1 \le t \le D}\sum_{\{j:\; C_j(\sigma^*_2)=t\}}w_j \le \frac{m}{2}F(\sigma_2)  \le \frac{m}{2}\;F(\sigma^A).
\]

The performance guarantee $\frac{m}{2}$ is tight for each algorithm $A \in \mathfrak{A}$, because, as will be shown below, for any  $\varepsilon>0$, there exists an instance for which 
\begin{equation}\label{theorem: F}
 F(\sigma^*) > \left( \frac{m}{2} - \varepsilon \right) F(\sigma^A).
\end{equation}
Indeed, assume that $\varepsilon \le \displaystyle \frac{1}{2}$ and consider the instance where the agreement graph is comprised of a complete graph of $mD$ nodes and $D$ disjoint edges. All jobs, constituting the complete graph, have the same weight $w$, whereas all jobs, constituting the disjoint edges, have the same weight $w+\delta$, where 
\begin{equation}\label{theorem: delta}
 0 < \delta < \frac{2\varepsilon w}{m-2\varepsilon}.
\end{equation}
Since each job, covered by the $D$ disjoint edges, has a weight greater than the weight of any job that is not covered by these $D$ edges, the set $J(\sigma_2)$ is the set of all jobs covered by the $D$ disjoint edges, and 
\[
 F(\sigma_2) = 2D(w+\delta).
\]
On the other hand, any job covered by one of the $D$ disjoint edges, can be processed concurrently only with one job, namely the job covered by the same edge. Furthermore, since, for all $m \ge 3$, 
\[
 \frac{2\varepsilon w}{m-2\varepsilon} \le \frac{(m-2)w}{2},
\]
by virtue of (\ref{theorem: delta}), 
\[
 mw >2 (w+\delta),
\]
and therefore, $F(\sigma^*) = mwD$. Consequently, taking into account (\ref{theorem: delta}),
\[
 \frac{F(\sigma^*)}{F(\sigma^A)} = \frac{mwD}{2D(w+\delta)} = \frac{m(w+\delta)}{2(w+\delta)} - \frac{m\delta}{2(w+\delta)}
 > \frac{m}{2} - \varepsilon,
\]
which implies (\ref{theorem: F}).
\qed
\end{proof}

\section{Scheduling jobs with arbitrary processing times}
\label{sec:algo}

\subsection{Integer linear programming formulations}
\label{sec:ILP}

Two integer linear programming formulations for the case of arbitrary processing times were proposed in \cite{ha2019capacitated}.
This section presents two new formulations, based on the approach from \cite{sevaux_2001}.
Let $M=\{1,\dots,m\}$ be the set of available machines. 
Note that the maximum position on which an on-time job may be scheduled on a machine is at most $k_{max}=\min\{n, \lfloor D/\min_{j=1}^n\{p_j\} \rfloor \}$.
Let $K=\{1,\dots,k_{max}\}$ be the set of available positions of on-time jobs.
Let $e$ denote the number of conflicting pairs of jobs.

For the first formulation, denoted by ILP1, define for all $j\in N$, $k\in K$ and $l\in M$ binary variables $u_{jkl}$ such that $u_{jkl}=1$ if job $j$ is not tardy and is scheduled at position $k$ on machine $l$, and $u_{jkl}=0$ otherwise.
For any $j\in N$, let $U_j=1$ if job $j$ is tardy, and $U_j=0$ if it is completed on time.
Moreover, for any two conflicting jobs $j,g\in N$, let $y_{jg}=0$ if job $j$ precedes job $g$, i.e., job $j$ finishes before $g$ starts, and $y_{jg}=1$ otherwise.
Finally, let $t_{kl}$ denote the starting time of the job at position $k$ on machine $l$, and let $\tau_j$ be the starting time of job $j$.
The considered problem can be stated as follows.

\begin{eqnarray}
\textrm{(ILP1)}&&\textrm{minimise}\quad  \sum_{j=1}^n w_jU_j \label{eq-ilp1-obj}\\
\textrm{s.t.}&&\sum_{k=1}^n \sum_{l=1}^m u_{jkl} + U_j =1  \quad \forall \ j\in N \label{eq-ilp1-all-jobs}\\
&&\sum_{j=1}^n u_{jkl} \leq 1 \quad  \forall\ k\in K, \ \ l\in M \label{eq-ilp1-pos-once}\\
&&t_{kl} + \sum_{j=1}^n p_j u_{jkl} \leq t_{k+1,l} \quad  \forall\ k\in K \setminus \{k_{max}\},  l\in M \label{eq-ilp1-no-overlap}\\
&&t_{kl}+ \sum_{j=1}^n p_j u_{jkl}  \leq D \quad  \forall\ k\in K,  l\in M \label{eq-ilp1-not-late}\\
&&\tau_j + D(1-u_{jkl}) \geq t_{kl}  \quad  \forall\ j \in N,  k\in K,   l\in M \label{eq-ilp1-times-1}\\
&&t_{kl} + D(1-u_{jkl}) \geq \tau_j  \quad  \forall\ j\in N,  k\in K,  l\in M \label{eq-ilp1-times-2}\\
&& \tau_j+p_j(1-U_j) - Dy_{jg} \leq   \tau_g  \quad  \forall\ (j,g)\in E \label{eq-ilp1-j-before-g}\\
&& y_{jg}+y_{gj}\leq 1  \quad \forall\ (j,g)\in E,  j<g \label{eq-ilp1-no-conflicts}\\
&&t_{kl}\geq 0  \quad  \forall\ k\in K,  l\in M \label{eq-ilp1-t-type}\\
&&\tau_j \geq 0  \quad  \forall\ j\in N \label{eq-ilp1-tau-type}\\
&&U_j \in \{0,1\}  \quad  \forall\ j\in N \label{eq-ilp1-Uj-type}\\
&&u_{jkl} \in \{0,1\} \quad  \forall\ j\in N,\ \ k\in K, \ \ l\in M  \label{eq-ilp1-ujkl-type}\\
&&y_{jg} \in \{0,1\}  \quad  \forall\ (j,g)\in E \label{eq-ilp1-y-type}
\end{eqnarray}

The objective in the above program is to minimise the weighted number of tardy jobs (\ref{eq-ilp1-obj}), which is equivalent to maximising the weighted number of on-time jobs.
Equations (\ref{eq-ilp1-all-jobs}) guarantee that each job is either tardy or scheduled at exactly one position on one machine.
By (\ref{eq-ilp1-pos-once}), at most one job is scheduled at each position on each machine.
Constraints (\ref{eq-ilp1-no-overlap}) ensure that no two jobs are executed at the same time on the same machine.
According to (\ref{eq-ilp1-not-late}), a job that is scheduled at position $k$ on machine $l$ is completed by time $D$.
Inequalities (\ref{eq-ilp1-times-1}) and (\ref{eq-ilp1-times-2}) ensure that if job $j$ is scheduled at position $k$ on machine $l$, then $\tau_j=t_{kl}$.
Constraints (\ref{eq-ilp1-j-before-g}) guarantee that for any two conflicting jobs $j$ and $g$, if $j$ is scheduled and $y_{jg}=0$, then $j$ is finished before job $g$ starts.
By (\ref{eq-ilp1-no-conflicts}), no two conflicting jobs are executed at the same time.

To construct the second formulation, denoted by ILP2, let binary variables $U_j$ and $y_{jg}$ be defined as in ILP1.
Additionally, for each $j\in N$ and $t\in \{0,\dots, D-p_j\}$, let $v_{jt}=1$ if job $j$ starts at time $t$ (on an arbitrary machine), and $v_{jt}=0$ otherwise.
The optimal schedule can be found in the following way.

\begin{eqnarray}
\textrm{(ILP2)}&& \textrm{minimise}\quad  \sum_{j=1}^n w_jU_j \label{eq-ilp2-obj}\\
\textrm{s.t.}&&\sum_{t=0}^{D-p_j} v_{jt} + U_j=1  \quad \forall\ j\in N \label{eq-ilp2-all-jobs-once}\\
&&\sum_{j=1}^n \sum_{s=\max\{0,t-p_j+1\}}^{\min\{t,D-p_j\}} v_{js} \leq m \quad \forall\ t=0,\dots,D-1 \label{eq-ilp2-machines}\\
&& \sum_{t=0}^{D-p_j} t v_{jt} + p_j(1-U_j)  - Dy_{jg} \leq   \sum_{t=0}^{D-p_g}t v_{gt}  \quad \forall\ (j,g)\in E \label{eq-ilp2-j-before-g}\\
&& y_{jg}+y_{gj}\leq 1  \quad\forall\ (j,g)\in E,  j<g \label{eq-ilp2-no-conflicts}\\
&&v_{jt} \in \{0,1\}  \quad \forall\ j\in N,  t=0,\dots,D-p_j \label{eq-ilp2-v-type}\\
&&U_j \in \{0,1\}  \quad \forall\ j\in N \label{eq-ilp2-Uj-type}
\end{eqnarray}

Once again, the weighted number of tardy jobs is minimised (\ref{eq-ilp2-obj}).
Constraints (\ref{eq-ilp2-all-jobs-once}) guarantee that each job $j$ is either tardy or scheduled to start at exactly one moment $t\leq D-p_j$.
At most $m$ jobs are executed at any time $t$ by (\ref{eq-ilp2-machines}).
Inequalities (\ref{eq-ilp2-j-before-g}) ensure that for any two conflicting jobs $j$ and $g$, if $j$ is scheduled and $y_{jg}=0$, then $j$ is finished before job $g$ starts.
Conflicting jobs are not executed at the same time by (\ref{eq-ilp2-no-conflicts}).

\subsection{Heuristic algorithms}
\label{sec:heur}

This section presents heuristic algorithms for the analysed problem.
In the schedule representation used, a list of assigned jobs and their starting times is stored for each machine.
Additionally, a separate list of tardy jobs is maintained.

Firstly, a variable neighborhood search algorithm VNS is proposed.
Variable neighborhood search is a metaheuristic for solving combinatorial optimisation problems, introduced by \cite{mladenovic1997VNS}.
It consists in a systematic change of neighborhood within a local search algorithm.
There exist many variants of variable neighborhood search, which have been successfully applied in many areas \cite{gendreau2010metaheuristics}.
In this work, the variable neighborhood descent method is used.
Consider $k_{max}$ neighborhoods $N_1,\dots,N_{k_{max}}$.
Let $x$ be the initial solution passed to the algorithm.
At the beginning, the current neighborhood number is $k=1$.
In each iteration of the algorithm, the best solution $x'$ in neighborhood $N_k(x)$ is found.
If an improvement is obtained, i.e. $x'$ is better than $x$, then $x$ is updated to $x'$ and $k$ is set to 1.
Otherwise, $k$ is increased by 1 and the next neighborhood $N_k(x)$ is considered.
The process continues until $k$ exceeds $k_{max}$.

In the proposed variable neighborhood search, the initial schedule is delivered by a list scheduling algorithm which processes the jobs in the weighted shortest processing time order.
Thus, jobs are first sorted in such a way that $p_1/w_1\leq \dots \leq p_n/w_n$.
Afterwards, each consecutive job $j$ is assigned to the machine that is the first to finish processing assigned jobs in the current schedule.
The earliest possible starting time of job $j$ is then computed, taking into account the time when processing on the selected machine finishes, as well as the processing intervals of already scheduled jobs conflicting with $j$.
If it is not possible to finish job $j$ by time $D$, this job is removed from the machine's list and added to the list of tardy jobs. 
After the initial schedule is constructed, the following six neighborhoods are considered in the search procedure.
\begin{itemize}
\item $N_1(\sigma)$ consists of all schedules obtained from $\sigma$ by swapping a single pair of jobs on one machine;
\item $N_2(\sigma)$ consists of all schedules obtained from $\sigma$ by moving one job to a different position on the machine it is assigned to;
\item $N_3(\sigma)$ consists of all schedules obtained from $\sigma$ by swapping a pair of jobs assigned to different machines;
\item $N_4(\sigma)$ consists of all schedules obtained from $\sigma$ by moving a job scheduled on any machine $i$ to an arbitrary position on a different machine;
\item $N_5(\sigma)$ consists of all schedules obtained from $\sigma$ by replacing an arbitrary scheduled job by an arbitrary tardy job;
\item $N_6(\sigma)$ consists of all schedules obtained from $\sigma$ by moving an arbitrary tardy job to an arbitrary position on any machine.
\end{itemize}
The order of the neighborhoods was selected on the basis of preliminary computational experiments, which suggested that the best results are produced when the neighborhoods obtained by changes on a single machine are considered first, and the neighborhoods obtained by assigning to machines the jobs from the tardy list are considered last.
Naturally, after changing the assignment of jobs to machines and positions, the starting times of the jobs have to be recomputed, taking into account job conflicts.
If due to the changes made, a job that was assigned to a machine becomes tardy, it is removed from the machine's list of jobs and added to the list of tardy jobs.
Moreover, after executing the changes, the list of tardy jobs is scanned in the weighted shortest processing time order, and additional jobs from this list are assigned to the least loaded machines, if possible.

Secondly, the proposed VNS algorithm is embedded in the iterated search framework.
The iterated variable neighborhood search algorithm IVNS starts with executing the VNS.
The obtained schedule is then modified by making $\lceil 0.05 n \rceil$ changes consisting in moving a random (scheduled or tardy) job to a random position on a random machine.
Job starting times are recomputed, and additional jobs are scheduled if possible, as explained in the description of the VNS algorithm.
After this shaking step, the variable neighborhood search is run again.
The whole procedure is repeated $r=10$ times.
The number of changes made in the shaking procedure was selected on the basis of preliminary computational experiments.
The number of iterations $r$ was chosen as a compromise between solution quality and the running time of the algorithm.

\section{Computational experiments}
\label{sec:exper}

In this section, the results of computational experiments on the quality of the proposed algorithms are presented.
The algorithms were implemented in C++, and integer linear programs were solved using Gurobi.
In addition to the algorithms presented in Section \ref{sec:algo}, linear programming formulations F1 and F2 proposed in \cite{ha2019capacitated} were implemented.
The experiments were performed on an Intel Core i7-7820HK CPU @ 2.90GHz with 32GB RAM.

In the generated test instances, the number of machines was $m\in [2,10]$, and the number of jobs was $n\in \{5m, 10m\}$.
The job processing times $p_j$ were chosen randomly from the interval $[50,150]$, and the job weights $w_j$ were selected randomly from the range $[1,5]$.
A parameter $\delta>0$ was used to control the ratio between the available time window and the expected amount of work per machine.
Since the expected duration of a job was 100, the common deadline of all jobs was set to $D= 100 \delta n/m$.
The number of conflicting jobs was controlled by a parameter $c\in (0,1)$.
For a given value of $c$, the conflict graph $G$ contained $c \binom{n}{2}$ randomly chosen edges.
In the experiments presented in this paper, $c=0.1$ was used.
For each analysed combination of instance parameters, 30 tests were generated and solved.

Many instances could not be solved to optimality in reasonable time using the integer linear programs.
Therefore, a one hour time limit was imposed on algorithms ILP1, ILP2, F1 and F2.
Since the optimum solutions were not known in all cases, the total weights of on-time jobs returned by the respective algorithms were compared to an upper bound $UB$ defined as the smallest upper bound obtained by Gurobi while solving the integer linear programs.
Schedule quality was measured by the relative percentage error with respect to $UB$.

\begin{table}
\caption{Performance of the integer linear programs.
\label{tab-ILP}
}
\begin{center}
\begingroup
\setlength{\tabcolsep}{5pt}
\begin{tabular}{lrrrrrr}
\hline
Algorithm & \multicolumn{2}{c}{$m=2, n=10$} &  \multicolumn{2}{c}{$m=2,n=20$} & \multicolumn{2}{c}{$m=4,n=40$}\\
 & error (\%) & time (s) & error (\%)  & time (s) & error (\%) & time (s)\\
\hline
ILP1 & 0.00 & 4.22E$-2$ & 0.00 & 2.12E$-1$ & 0.00 & 1.50E$+2$\\
ILP2 & 0.00 & 2.36E$+0$ & 0.00 & 1.55E$+1$ & 0.00 & 2.45E$+2$\\
F1 \cite{ha2019capacitated} & 0.00 & 1.71E$+2$ & 0.00 & 3.61E$+3$ & 1.09 & 3.61E$+3$\\
F2 \cite{ha2019capacitated} & 0.00 & 2.14E$+1$ & 0.00 & 3.61E$+3$ & 1.68 & 3.61E$+3$\\
\hline
\end{tabular}
\endgroup
\end{center}
\end{table}

In the first experiment, the performance of integer linear programs ILP1, ILP2, F1 and F2 was compared on instances of various sizes with $\delta=0.7$.
The obtained results are presented in Table \ref{tab-ILP}.
All analysed algorithms were able to solve all the instances with $m=2$ and $n=10$ within the imposed time limit.
Still, the running times of the algorithms differed significantly.
ILP1 had the best average running time, followed by ILP2, then F2, and finally F1, which was four orders of magnitude slower than ILP1.
The optimum schedules were also found by all algorithms for all instances with $m=2$ and $n=20$.
However, F1 and F2 did not finish their computations within an hour for any such tests.
Although the optimum solutions were found, the upper bounds computed during one hour using these formulations were larger.
Note that while the running time of the integer linear programs was limited to one hour, the total algorithm running time also includes building the model and retrieving the solution found.
This is why the average times reported for F1 and F2 exceed one hour.
All generated instances with $m=4$, $n=40$ were solved to optimality by ILP1 and ILP2, but F1 and F2 did not find the best schedules for some of these tests.
Their average errors were 1.09\% and 1.68\%, correspondingly.
It can be concluded that the formulations ILP1 and ILP2 are more efficient than F1 and F2 proposed in \cite{ha2019capacitated}. 
As the running times of F1 and F2 were very long even for small instances, these two algorithms were not used in the remaining experiments.

\begin{figure}[t]
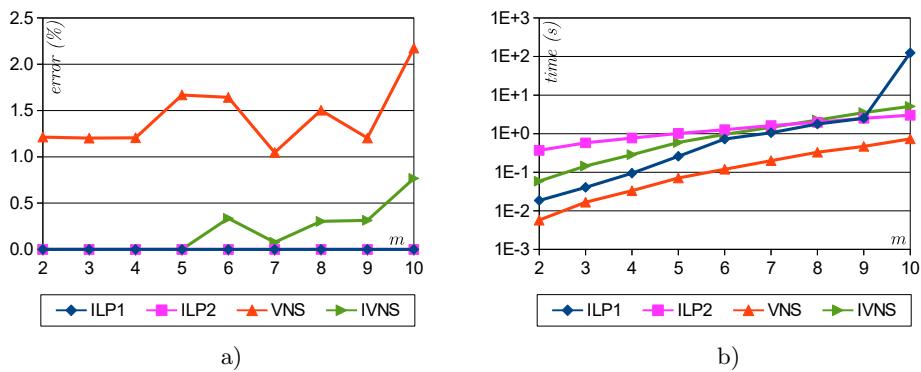

\unitlength1cm
\begin{picture}(11,5)
\put(2.9,0){a)}
\put(0.1,0.5){\includegraphics[height=4.2cm]{vs-m-n-5m-delta-0.3-qual.pdf}}
\put(9.5,0){b)}
\put(6.5,0.5){\includegraphics[height=4.2cm]{vs-m-n-5m-delta-0.3-time.pdf}}
\end{picture}
\caption{\label{fig-vs-m-n-5m-delta-0.3}
Algorithm performance vs. $m$ for $n=5m$, $\delta=0.3$.
a) Average quality,
b) average execution time.
}
\end{figure}

Fig. \ref{fig-vs-m-n-5m-delta-0.3} presents the results obtained for instances with $m=2,\dots,10$, $n=5m$ and $\delta=0.3$.
All tests in this group were solved to optimality by both ILP1 and ILP2 (see Fig. \ref{fig-vs-m-n-5m-delta-0.3}a).
VNS produced good schedules, its average error is below 2\% for all $m\leq 9$, and only reaches 2.17\% for $m=10$.
Using IVNS leads to obtaining substantially better results, although at a higher computational cost.
All tests with $m\leq 5$ were solved to optimality by this algorithm, and the largest average error, obtained for $m=10$, is only 0.77\%.
Naturally, the running times of all algorithms increase with growing $m$ (cf. Fig. \ref{fig-vs-m-n-5m-delta-0.3}b).
VNS is the fastest among the analysed algorithms.
ILP1 is faster than IVNS and ILP2 for $m\leq 9$, which means that it is very suitable for solving small instances with a small $\delta$.
However, its average running time rapidly increases when $m=10$.
This is caused by the fact that ILP1 reached the time limit of 1 hour for one instance with $m=10$.
Contrarily, ILP2 is slower than IVNS for $m\leq 7$, but faster than IVNS for $m\geq 8$.

\begin{figure}[t]
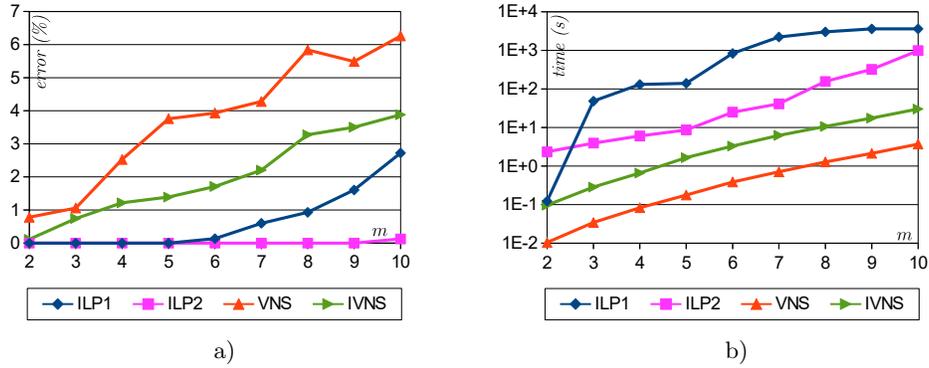

\unitlength1cm
\begin{picture}(11,5)
\put(2.7,0){a)}
\put(0,0.5){\includegraphics[height=4.2cm]{vs-m-n-5m-delta-0.7-qual.pdf}}
\put(9.5,0){b)}
\put(6.5,0.5){\includegraphics[height=4.2cm]{vs-m-n-5m-delta-0.7-time.pdf}}
\end{picture}
\caption{\label{fig-vs-m-n-5m-delta-0.7}
Algorithm performance vs. $m$ for $n=5m$, $\delta=0.7$.
a) Average quality,
b) average execution time.
}
\end{figure}

The results obtained for tests with $m=2,\dots,10$, $n=5m$ and $\delta=0.7$ are shown in Fig. \ref{fig-vs-m-n-5m-delta-0.7}.
ILP1 found optimal solutions for all instances with $m\leq 5$ but was unsuccessful at some tests for each larger $m$.
In particular, it was not able to finish computations within an hour for any instances with $m\geq 9$.
The average ILP1 error is below 1\% for $m\leq 8$ and reaches 2.72\% for $m=10$.
ILP2 performed much better, finding the optimum schedules for all instances with $m\leq 9$.
Even for $m=10$, the average error of ILP2 is only 0.13\%.
The distance between VNS and IVNS schedules is small for $m\in\{2, 3\}$, but increases for larger $m$.
The errors obtained by VNS are below 6.5\%, and the IVNS errors are smaller than 4\%.
Thus, instances with $\delta=0.7$ are in general more demanding than those with $\delta=0.3$.
In the group of tests with $\delta=0.7$, the running time of ILP1 is close to that of IVNS only for $m=2$.
For $m\geq 3$, the running time of ILP1 significantly increases, and ILP1 becomes the slowest of all analysed algorithms.
In particular, for $m\geq 9$, the average running time of ILP1 is one hour, since it solved no instances within the imposed time limit.
In contrast, the average execution time of ILP2 is 985 seconds for $m=10$.

\begin{figure}[t]
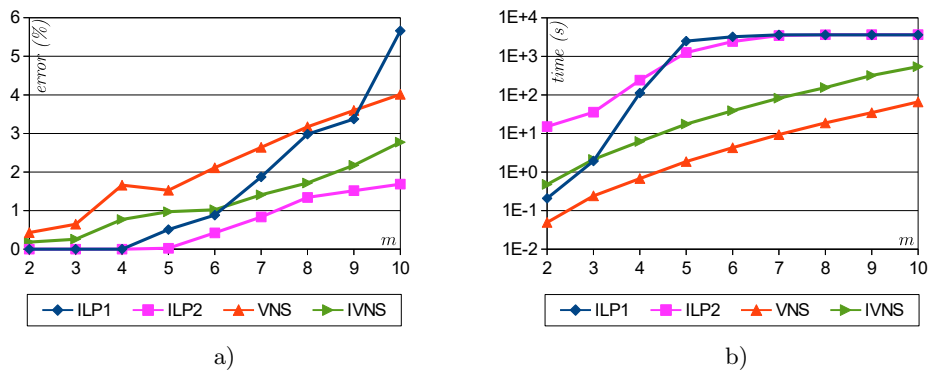

\unitlength1cm
\begin{picture}(11,5)
\put(2.7,0){a)}
\put(0,0.5){\includegraphics[height=4.2cm]{vs-m-n-10m-delta-0.7-qual.pdf}}
\put(9.5,0){b)}
\put(6.5,0.5){\includegraphics[height=4.2cm]{vs-m-n-10m-delta-0.7-time.pdf}}
\end{picture}
\caption{\label{fig-vs-m-n-10m-delta-0.7}
Algorithm performance vs. $m$ for $n=10m$, $\delta=0.7$.
a) Average quality,
b) average execution time.
}
\end{figure}

The results obtained for the largest instances, with variable $m$, $n=10m$ and $\delta=0.7$, are shown in Fig. \ref{fig-vs-m-n-10m-delta-0.7}.
In this group, no tests with $m\geq 7$ were solved within the time limit by ILP1, and no tests with $m\geq 8$ were solved within the time limit by ILP2.
Hence, the running times of both these algorithms stabilise at one hour for large $m$.
There are significant differences between the qualities of solutions obtained by ILP1 and ILP2.
For $m\leq 4$, both these algorithms always find optimum solutions, but for larger $m$, ILP2 clearly outperforms ILP1.
For $m=10$, the average distance from $UB$ is 1.68\% in the case of ILP2 solutions, and 5.66\% in the case of ILP1 schedules. 
Moreover, ILP1 is outperformed by both VNS and IVNS for such $m$.
Thus, ILP1 is not recommended for solving large instances.
It is interesting that VNS and IVNS obtain here better results than for the tests with $n=5m$ and $\delta=0.7$.
The average errors of VNS reach 4.01\% for $m=10$ and the average distance of
IVNS schedules from the upper bound is below 3\% for all $m$.

In addition to the above experiments, a preliminary comparison of the methods in this paper with the ones in communications \cite{tresoldi21} and \cite{ha21preprint} is presented. The communications \cite{tresoldi21} and \cite{ha21preprint} appeared after this paper has been completed.
The algorithms ILP2 and IVNS are compared with the integer linear programming formulation F3 from \cite{ha21preprint}, and the integer linear program FT and heuristic HE from \cite{tresoldi21}.
The comparison is preliminary because of the use of different computers and the possible differences between Gurobi and CPLEX.

Algorithms F3, ILP2 and IVNS were run on the set of 432 instances used in \cite{ha21preprint}, obtained from the authors.
Algorithm ILP2 solved optimally 430 instances, and its average error was 0.02\%.
Formulation F3 delivered optimal solutions for 394 tests, and had an average error of 0.58\%.
The average execution time was 149 seconds in the case of ILP2 and 489 seconds in the case of F3.
Thus, it seems that ILP2 is more efficient than F3.
Heuristic IVNS obtained 339 optimal solutions, average error 0.70\% and average execution time 1.42 seconds.

Furthermore, algorithms ILP2 and IVNS were executed on the set of 3840 large instances used in \cite{tresoldi21}, which are publicly available.
Algorithm ILP2 solved all of them optimally in the average time of 1.76 seconds, while
FT is reported in \cite{tresoldi21} to solve all but one instance optimally, in the average time of 35 seconds.
These results cannot be compared directly because different computers and different solvers were used to run both algorithms.
However, the difference between the execution times seems significant.
Algorithm IVNS solved optimally 2430 instances, and HE found optimum solutions for 1851 tests \cite{tresoldi21}.
The average error of HE on instances not solved optimally is reported to be 1.91\%, which gives the average error of 0.99\% overall, while the average error of IVNS was 0.56\%.
The average computation time of HE is 0.21 seconds according to \cite{tresoldi21}, and the average running time of IVNS was 4.77 seconds.
Again, the time results cannot be compared directly.
Still, it seems that IVNS obtains better solutions than HE, but at a higher computational cost.

\section{Conclusions}

For the problem of maximising the total weight of on-time jobs with a common deadline, scheduled on parallel machines subject to a conflict graph, the paper presents a polynomial-time algorithm for the case of two parallel machines, and a performance guarantee which is tight for a broad family of algorithms for an arbitrary number of machines. Both these results were obtained for UET jobs. For jobs with general processing times, the paper presents two new integer linear programming formulations and a variable neighborhood search algorithm, which is embedded in an iterated search framework. Computational experiments show that both proposed integer linear programs obtain better results than those in \cite{ha2019capacitated}, ILP2 is particularly efficient for large instances, and the heuristic algorithm IVNS obtains good results in a reasonable time. Further development of such optimisation procedures, and their experimental evaluation, should be one of the main directions of future research. In particular, the experiments should include a more detailed comparison of the proposed algorithms with the methods recently announced in \cite{ha21preprint,tresoldi21}.

\bibliographystyle{splncs04}
\bibliography{ref}

\begin{thebibliography}{10}
\providecommand{\url}[1]{\texttt{#1}}
\providecommand{\urlprefix}{URL }
\providecommand{\doi}[1]{https://doi.org/#1}

\bibitem{arkin1987scheduling}
Arkin, E.M., Silverberg, E.B.: Scheduling jobs with fixed start and end times.
  Discrete Applied Mathematics  \textbf{18}(1), ~1--8 (1987)

\bibitem{baker1996mutual}
Baker, B.S., Coffman, E.G.: Mutual exclusion scheduling. Theoretical Computer
  Science  \textbf{162}(2),  225--243 (1996)

\bibitem{balakrishnan2012textbook}
Balakrishnan, R., Ranganathan, K.: A textbook of graph theory. Springer Science
  \& Business Media (2012)

\bibitem{bampis2014bounded}
Bampis, E., Kononov, A., Lucarelli, G., Milis, I.: Bounded max-colorings of
  graphs. Journal of Discrete Algorithms  \textbf{26},  56--68 (2014)

\bibitem{bendraouche2012scheduling}
Bendraouche, M., Boudhar, M.: Scheduling jobs on identical machines with
  agreement graph. Computers \& operations research  \textbf{39}(2),  382--390
  (2012)

\bibitem{bendraouche2016scheduling}
Bendraouche, M., Boudhar, M.: Scheduling with agreements: new results.
  International Journal of Production Research  \textbf{54}(12),  3508--3522
  (2016)

\bibitem{bendraouche2015scheduling}
Bendraouche, M., Boudhar, M., Oulamara, A.: Scheduling: Agreement graph vs
  resource constraints. European Journal of Operational Research
  \textbf{240}(2),  355--360 (2015)

\bibitem{bodlaender1995restrictions}
Bodlaender, H.L., Jansen, K.: Restrictions of graph partition problems. {Part
  I}. Theoretical Computer Science  \textbf{148}(1),  93--109 (1995)

\bibitem{demange2007time}
Demange, M., De~Werra, D., Monnot, J., Paschos, V.T.: Time slot scheduling of
  compatible jobs. Journal of Scheduling  \textbf{10}(2),  111--127 (2007)

\bibitem{diestel2012graph}
Diestel, R.: Graph theory, volume 173 of. Graduate texts in mathematics p.~7
  (2012)

\bibitem{edmonds1965maximum}
Edmonds, J.: Maximum matching and a polyhedron with 0, 1-vertices. Journal of
  research of the National Bureau of Standards B  \textbf{69}(125-130),  55--56
  (1965)

\bibitem{edmonds1965paths}
Edmonds, J.: Paths, trees, and flowers. Canadian Journal of mathematics
  \textbf{17},  449--467 (1965)

\bibitem{even2009scheduling}
Even, G., Halld{\'o}rsson, M.M., Kaplan, L., Ron, D.: Scheduling with
  conflicts: online and offline algorithms. Journal of scheduling
  \textbf{12}(2),  199--224 (2009)

\bibitem{gardi2008mutual}
Gardi, F.: Mutual exclusion scheduling with interval graphs or related classes.
  part ii. Discrete applied mathematics  \textbf{156}(5),  794--812 (2008)

\bibitem{gardi2009mutual}
Gardi, F.: Mutual exclusion scheduling with interval graphs or related classes,
  part i. Discrete Applied Mathematics  \textbf{157}(1),  19--35 (2009)

\bibitem{garey1979computers}
Garey, M.R., Johnson, D.S.: Computers and intractability, vol.~174. freeman San
  Francisco (1979)

\bibitem{gendreau2010metaheuristics}
Gendreau, M., Potvin, J.Y.: Handbook of Metaheuristics. Springer Publishing
  Company, Incorporated (2010)

\bibitem{ha21preprint}
H{\`a}, M.H., Ta, D.Q., Nguyen, T.T.: Exact algorithms for scheduling problems
  on parallel identical machines with conflict jobs (2021),
  \url{https://arxiv.org/abs/2102.06043v1}

\bibitem{ha2019capacitated}
H{\`a}, M.H., Vu, D.M., Zinder, Y., Nguyen, T.T.: On the capacitated scheduling
  problem with conflict jobs. In: 2019 11th International Conference on
  Knowledge and Systems Engineering (KSE). pp.~1--5. IEEE (2019)

\bibitem{jansen2003mutual}
Jansen, K.: The mutual exclusion scheduling problem for permutation and
  comparability graphs. Information and Computation  \textbf{180}(2),  71--81
  (2003)

\bibitem{kowalczyk2017exact}
Kowalczyk, D., Leus, R.: An exact algorithm for parallel machine scheduling
  with conflicts. Journal of Scheduling  \textbf{20}(4),  355--372 (2017)

\bibitem{lovasz2009matching}
Lov{\'a}sz, L., Plummer, M.D.: Matching theory, vol.~367. American Mathematical
  Soc. (2009)

\bibitem{mladenovic1997VNS}
Mladenovi\'{c}, N., Hansen, P.: Variable neighborhood search. Computers \&
  Operations Research  \textbf{24}(11),  1097--1100 (1997)

\bibitem{mohabeddine2019new}
Mohabeddine, A., Boudhar, M.: New results in two identical machines scheduling
  with agreement graphs. Theoretical Computer Science  \textbf{779},  37--46
  (2019)

\bibitem{plesnik1999constrained}
Plesn{\'i}k, J.: Constrained weighted matchings and edge coverings in graphs.
  Discrete Applied Mathematics  \textbf{92}(2-3),  229--241 (1999)

\bibitem{sevaux_2001}
Sevaux, M., Thomin, P.: Heuristics and metaheuristics for a parallel machine
  scheduling problem: a computational evaluation. Tech. Rep. 01-2-SP,
  University of Valenciennes (2001)

\bibitem{tresoldi21}
Tresoldi, E.: Solution approaches for the capacitated scheduling problem with
  conflict jobs (2021). \doi{10.13140/RG.2.2.11756.18566}

\bibitem{zuckerman2006linear}
Zuckerman, D.: Linear degree extractors and the inapproximability of max clique
  and chromatic number. In: Proceedings of the thirty-eighth annual ACM
  symposium on Theory of computing. pp. 681--690 (2006)

\end{thebibliography}

\end{document}